\theoremstyle{plain}
\newtheorem{theorem}{Theorem}[section]
\newtheorem{proposition}[theorem]{Proposition}
\newtheorem{lemma}[theorem]{Lemma}
\newtheorem{corollary}[theorem]{Corollary}
\theoremstyle{definition}
\newtheorem{definition}[theorem]{Definition}
\newtheorem{remark}[theorem]{Remark}
\theoremstyle{remark}
\renewenvironment{thebibliography}[1]{%
\begin{oldthebibliography}{#1}%
\setlength{\baselineskip}{.9em}
\linespread{1}
\small
\setlength{\parskip}{0.3ex}%
\setlength{\itemsep}{.5em}%
}%
{%
\end{oldthebibliography}%
}
\newcommand{\eps}{\varepsilon}
\newcommand{\R}{\mathbb{R}}
\newcommand{\cA}{\mathcal{A}}
\newcommand{\cS}{\mathcal{S}}
\DeclareMathOperator{\supp}{supp}
\DeclareMathOperator{\tr}{Tr}
\DeclareMathOperator*{\argmax}{arg\, max}
\newcommand{\1}{\mathbf{1}}
\newcommand{\sint}{\stackrel{\mbox{\tiny$\bullet$}}{}}
\newcommand{\br}[1]{\langle #1 \rangle}
\numberwithin{equation}{section}
\begin{document}

\title{\vspace{-0em}
A Risk-Neutral Equilibrium Leading to\\Uncertain Volatility Pricing%
\thanks{We are most grateful to Jos\'e Scheinkman for the stimulating discussions that have initiated this work. We would also like to thank the  Editors and the referees for their detailed and constructive remarks which have greatly improved this paper.}}
\date{\today}
\author{
  Johannes Muhle-Karbe%
  \thanks{University of Michigan, Department of Mathematics, johanmk@umich.edu.}
  \and
  Marcel Nutz%
  \thanks{
  Columbia University, Depts.\ of Statistics and Mathematics, mnutz@columbia.edu. Research supported by an Alfred P.\ Sloan Fellowship and NSF Grant DMS-1512900.}
}
\maketitle \vspace{-1.2em}
\begin{abstract}
We study the formation of derivative prices in equilibrium between risk-neutral agents with heterogeneous beliefs about the dynamics of the underlying. Under the condition that short-selling is limited, we prove the existence of a unique equilibrium price and show that it incorporates the speculative value of possibly reselling the derivative. This value typically leads to a bubble; that is, the price exceeds the autonomous valuation of any given agent. Mathematically, the equilibrium price operator is of the same nonlinear form that is obtained in single-agent settings with strong aversion against model uncertainty. Thus, our equilibrium leads to a novel interpretation of this price.
\end{abstract}

{\small
\noindent \emph{Keywords} Heterogeneous Beliefs, Equilibrium, Derivative Price Bubble, Uncertain Volatility Model, Nonlinear Expectation

\noindent \emph{AMS 2010 Subject Classification}
91B51; 
91G20; 
93E20 
}

\section{Introduction}\label{se:intro}

Starting with~\cite{AvellanedaLevyParas.95, Lyons.95}, robust option pricing considers a class of plausible models for the underlying security and seeks strategies that hedge against the model risk. As a result, the associated pricing operator is apparently linked to extreme caution, making it difficult to explain how trades can be initiated at such quotes. In the Uncertain Volatility Model of~\cite{AvellanedaLevyParas.95, Lyons.95}, this price corresponds to a model that selects the worst-case volatility from a given range of volatility models at any point in time, thus leading to a Black--Scholes--Barenblatt pricing equation. The non-Markovian version of this pricing operator is known as the $G$-expectation \cite{Peng.07, Peng.08}.
More recently, a rich literature considering a variety of hedging instruments and underlying models has emerged; see, among many others, \cite{AcciaioBeiglbockPenknerSchachermayer.12, BeiglbockHenryLaborderePenkner.11, BouchardNutz.13, BurzoniFrittelliMaggis.15, GuyonMenegauxNutz.16, Nutz.13} for models in discrete time and \cite{BiaginiBouchardKardarasNutz.14, Cont.06, CoxHouObloj.14, CoxObloj.11,  DolinskySoner.12, DolinskySoner.14, GalichonHenryLabordereTouzi.11, HenryLabordereOblojSpoidaTouzi.12,HerrmannMuhlekarbe.16, HerrmannMuhleKarbeSeifried.15, Hobson.98, HobsonKlimmek.15, HobsonNeuberger.12, NeufeldNutz.12, Nutz.14} for continuous-time models. We refer to  \cite{Hobson.11, Obloj.04} for surveys.

In this paper, we show that the same prices also arise as unique equilibria for agents that worry neither about risk nor uncertainty, but instead disagree about the dynamics of the underlying. Thus, in our model, trades occur naturally at prices of the uncertain volatility type. From our point of view, the nonlinearity in the price reflects a speculative component that is added to the fundamental value of the derivative: the agents take into account that they may sell the derivative to an agent with different beliefs at a later point in time. This possibility is known as the ``resale option'' in the Economics literature.

The basic idea is that if a security exists in finite supply and cannot be shorted, equilibrium prices will reflect the most optimistic belief and therefore have an upward bias. This can be traced back to the static model of~\cite{Miller.77}. In a dynamic model, the relative optimism or pessimism of the agents changes over time, giving rise to the resale option and causing the agents to trade. This insight is already present in the discrete-time model of \cite{HarrisonKreps.78}, where agents disagree about the probability distribution of dividends paid by an asset, and is worked out very elegantly in the continuous-time model of \cite{ScheinkmanXiong.03} where utility-maximizing agents disagree about the drift rate of an asset; see also \cite{BerestyckiBruggemanMonneauScheinkman.15} for a finite-horizon version of this model. We refer to \cite{ScheinkmanXiong.04} for a comprehensive survey of this literature on ``speculative bubbles.'' In the present paper, we adapt these ideas to study how heterogeneity can affect derivative prices when agents have different beliefs about the dynamics of the underlying. In order to allow for the case of zero net supply, our model incorporates a limited amount of short-selling, and we show that the broad insights related to the resale option still apply in such a setting. Moreover, we show that the resulting prices are robust with respect to the specification of the short-selling restriction as well as the exogenous supply.

Before detailing our model, let us discuss a complementary strand of literature which starts from an \emph{exogenous} description of prices rather than equilibrium considerations. Bubbles then correspond to strict local martingale dynamics for which the current market price exceeds the expectation of future payoffs. In this context, \cite{cox.hobson.05,heston.al.07,jarrow.al.10,kardaras.al.15,lewis.00,pal.protter.10} study the pricing of derivatives and exhibit that several surprising features such as the failure of put-call-parity may arise. A related work is \cite{carr.al.14}, where the F\"ollmer measure is used to construct a pricing operator that restores put-call-parity for complete models where the exchange rate is driven by a strict local martingale. Further examples include \cite{biagini.nedelcu.15}, where bubbles in defaultable claims are studied, and \cite{biagini.mancin.16,dandapani.protter.16}, who focus on the interplay between bubbles and insider information as well as model uncertainty, respectively. Surveys of this large and growing literature can be found in \cite{jarrow.15,protter.13}. 
As succinctly summarized in \cite{biagini.al.14}, these papers ``make no attempt to contribute to a deeper economic understanding of bubbles on the side of price formation'' and ``instead focus on the perception of the fundamental value''. In contrast, the present paper contributes to the ``challenge in explaining how such bubbles are generated at the microeconomic level by the interaction of market participants'' \cite{biagini.al.14}.

To the best of our knowledge, this is the first study of heterogeneous beliefs as a reason for bubbles in derivatives. In this context, the paradigm of risk-neutral pricing provides a clear definition of fundamental value and therefore of a bubble. Moreover, risk-neutrality results in a great deal of tractability which will allow us to give a simple description of the agents' trading strategies for general models and derivative payoffs. 

In the remainder of this introduction, we sketch the main ideas of our approach in a simple case with two agents that use local volatility models for a tradable underlying. In the body of the paper, we shall derive our results for $n$ agents with general Markov models for an underlying that is not necessarily tradable.

Our starting point is an underlying security that can be traded without friction.
While its price $X$ is determined exogenously, the agents have different views on the future dynamics of $X$. Our goal is to find an equilibrium price at time $t=0$ for a derivative written on $X$, with payoff $f(X(T))$ at maturity $T$. The derivative exists in an exogenous supply of $s_{0}\geq0$ units and can be traded in continuous time by two agents $i\in\{1,2\}$. Thus, if $s_{0}=0$, the entire supply is generated endogenously by one of the agents, whereas $s_{0}>0$ corresponds to a model with an issuer exogenous to the equilibrium.
A zero net supply is natural in the context of equity options, for example. A positive exogenous supply is well-suited to situations where there is a natural separation between the issuers and the speculative market where the derivative is eventually traded. A typical example would be speculation in CDOs, where the issuers that bundle individual loans are clearly separated from the market where speculators trade them based on their individual estimates of the underlying default probabilities.
We assume that short positions in the derivative are limited to $k\geq0$ units (see also Remark~\ref{rk:shorting} for a possible variant). Each agent is risk-neutral, has their own stochastic model for the dynamics of $X$, and maximizes the P\&L from trading in the underlying and the derivative.
Specifically, agent $i$ uses a local volatility model $Q_{i}$ for $X$ under which
$$
  dX(t) = \sigma_{i}(t,X(t))\,dW_{i}(t), \quad X(0)=x,
$$
for some Brownian motion $W_{i}$. Given a price process $Z$ for the derivative, agent $i$ chooses trading strategies $G_i$ and $H_{i}$ in the underlying and the derivative to maximize the corresponding expected~P\&L. However, due to the martingale property, trading in the underlying does not contribute to the expected P\&L:
$$
  E_{i} \left[ \int_{0}^{T} (G_{i}(t)\,dX(t)+H_{i}(t)\,dZ(t))\right]=E_{i} \left[ \int_{0}^{T} H_{i}(t)\,dZ(t)\right].
$$
Thus, we henceforth focus on trading in the derivative alone. An alternative interpretation is that the underlying is not available for trading, as it is the case for credit derivatives, for example. For that reason, our general model also allows for a drift in the dynamics of $X$.

The process $Z$ is an equilibrium price if $Z(T)=f(X(T))$ matches the value of the derivative at the maturity and there exist strategies $H_{1}, H_{2}$ which are optimal for the agents and clear the market: $H_{1}+H_{2}=s_{0}$. Note that this notion of equilibrium is ``partial'' in that the dynamics of the underlying are specified exogenously whereas the dynamics of the derivative price are determined endogenously within the model as in, e.g., \cite{cheridito.al.16,kallsen.02}. This means that the trades of the speculative agents are assumed to have a substantial effect on the formation of the derivative price, but are less important for the larger market in which the underlying is traded.

In this setting, each agent's model is complete, so they both have a well-defined notion of a fundamental price. Indeed, agent $i$'s fundamental valuation is the  $Q_i$-expectation $E_{i}[f(X(T))]$ of the claim which can be found via the solution $v_{i}$ of the linear PDE
  \begin{equation*}
    \partial_{t}v(t,x) + \frac{1}{2}\sigma_{i}^{2}(t,x) \partial_{xx}v(t,x)=0,\quad v(T,\cdot)=f.
  \end{equation*}
If the derivative can be traded only at the initial time $t=0$, the equilibrium price is the larger value $\max\{v_1(0,x),v_2(0,x)\}$ of the agents' valuations, since at this price it is optimal for the agent with the higher valuation to hold all available units of the derivative and, in view of the short-selling constraint, holding $-k$ units is optimal for the more pessimistic agent.

In our dynamic model, however, the role of the relative optimist may change depending on the state $(t,x)$, which gives rise to the resale option. We shall show that the equilibrium price is given by the nonlinear PDE
\begin{equation}\label{eq:localVolPDE}
  \partial_{t}v(t,x) + \sup_{i=1,2}  \frac{1}{2}  \sigma_{i}^{2}(t,x) \partial_{xx}v(t,x) =0
\end{equation}
which corresponds to choosing the more optimistic volatility at any state $(t,x)$; i.e., the volatility that achieves the supremum in~\eqref{eq:localVolPDE}. Since this may change between the agents along a trajectory $(t,X(t))$ of the underlying, for instance if the functions $\sigma_{i}$ are not ordered or if the function $f$ is not concave or convex, the equilibrium price is typically higher than \emph{both} fundamental valuations---the difference is the value of the resale option or the speculative bubble, since it can be attributed to the possibility of future trading. It is worth noting that the bubble arises in a finite horizon setting where the agents agree about the value $f(X(T))$ at maturity, and despite symmetric information. Moreover, in contrast to the theory of strict local martingales, it is possible to obtain a bubble with bounded prices. 

We observe that the PDE~\eqref{eq:localVolPDE} coincides with the Black--Scholes--Baren\-blatt PDE for an uncertain volatility model with a range $[\underline\sigma,\overline \sigma]$ of volatilities, where 
$$
  \underline\sigma(t,x) = \min\{\sigma_{1}(t,x),\sigma_{2}(t,x)\},\quad \overline\sigma(t,x) = \max\{\sigma_{1}(t,x),\sigma_{2}(t,x)\},
$$
because 
$$
  \sup_{i=1,2}  \frac{1}{2} \sigma_{i}^{2}(t,x) \partial_{xx}v(t,x) = \sup_{a\in [\underline\sigma^{2}(t,x),\overline \sigma^{2}(t,x)]} \frac{1}{2} a \partial_{xx}v(t,x)
$$
are the very same operator. Alternately, this is the $G$-expectation if $\sigma_{i}$ are constant, and the random $G$-expectation \cite{Nutz.10Gexp, NutzVanHandel.12} in the general case. In the uncertain volatility setting, the PDE is interpreted as choosing the worst-case volatility within the interval $[\underline\sigma,\overline \sigma]$ at any state. In our setting, one may think of an imaginary agent that has the more optimistic view among $i\in\{1,2\}$ at any state.
Our risk-neutral setting is particularly tractable in that the trades correspond directly to the volatility; indeed, we shall see that the strategies $H_{i}(t) = h_{i}(t,X(t))$ are optimal, where
$$
    h_{i}(t,x) = \begin{cases}
    s_{0}+k,& \mbox{if $i$ is the unique maximizer in \eqref{eq:localVolPDE}},\\
    s_{0}/2,& \mbox{if both $j=1$ and $j=2$ are maximizers},\\
    -k, & \mbox{else.}
    \end{cases}
$$
(We have chosen a symmetric splitting when both volatilities are maximizers, but any splitting rule will do.) Thus, we see that in any state $(t,x)$, all available units of the derivative are held by the more optimistic agent. Or, if we introduced the above imaginary agent in the market, it would be optimal for that agent to hold all units at all times. Therefore, it is natural that this agent's valuation becomes the effective pricing mechanism in equilibrium. The mechanism of this price formation rests on the fact that as a consequence of risk-neutrality, the most optimistic agent is invariant with respect to the size of their portfolio. This is confirmed by the observation that the equilibrium price is independent of the constraint $k$ and the exogenous supply $s_{0}$. The form of $h_{i}$ also shows that trades happen whenever the beliefs ``cross;'' i.e., the maximizer in~\eqref{eq:localVolPDE} changes. In view of the diffusive properties of $X$, our model thus captures the volatile trading that can be observed during asset price bubbles.

The remainder of the article is organized as follows. In Section~\ref{se:main}, the above theory is established for $n$ agents using general, multidimensional Markovian models. Theorem~\ref{th:main} identifies equilibrium prices with solutions of a PDE, whereas Proposition~\ref{pr:controlProblem} interprets the PDE as a control problem and, in particular, shows uniqueness. Corollary~\ref{co:unifElliptic} presents regularity conditions under which existence and uniqueness can be deduced easily from general PDE results. In Section~\ref{s:example}, we present a solvable example with stochastic volatility models of Heston-type where the trading strategies can be described explicitly. The strategies provide some intuition for the agents' resale options and show that trading does indeed occur even for derivatives with convex payoffs, in all but the simplest models.

\section{General Model and Main Result}\label{se:main}

Departing slightly from the above notation, we consider $n\geq1$ agents and a $d$-dimensional underlying $X$. The components of $X$ represent quantities that may or may not be tradable, and thus it is meaningful to allow for non-zero drift.\footnote{As mentioned in the Introduction, the fact that we do not explicitly model trading in the underlying is equivalent to the assumption that any tradable component of $X$ is a martingale.} Specifically, let $X$ be the canonical process on $\Omega=C([0,T],\R^{d})$ for some time horizon $T>0$, where $\Omega$ is equipped with the canonical filtration and $\sigma$-field. For each $1\leq i\leq n$, we are given a probability $Q_{i}$ on $\Omega$ under which
\begin{equation}\label{eq:SDEforX}
  dX(t) = b_{i}(t,X(t))\,dt + \sigma_{i}(t,X(t))\,dW_{i}(t), \quad X(0)=x,
\end{equation}
where $W_{i}$ is a Brownian motion of (possibly different) dimension $d'$. We assume that the $d$-dimensional vector $b_{i}$ and the $d\times d'$ matrix $\sigma_{i}$ are continuous functions of $(t,x)\in [0,T]\times \R^{d}$ which are Lipschitz continuous (and hence of linear growth) in $x$, uniformly in~$t$. As a result, the SDE~\eqref{eq:SDEforX} has a unique solution and
\begin{equation}\label{eq:allMoments}
  E_{i}\left[\sup_{t\leq T} |X(t)|^{p}\right]<\infty, \quad p\geq 0,
\end{equation}
where $E_{i}[\,\cdot\,]$ denotes the expectation operator under $Q_{i}$. See~\cite[Section~2.5]{Krylov.80} for these facts.

\begin{definition}\label{de:strategy}
  We fix a constant $k\geq0$, the \emph{shorting constraint}. An \emph{admissible} strategy $H$ is a bounded, predictable process saisfying $H\geq -k$, and we write $\cA$ for the collection of all these strategies. Given a semimartingale $Z$ under $Q_{i}$ (to be thought of as the price process of the derivative), a strategy $H_{i}\in\cA$ is \emph{optimal} for agent~$i$ if
  $$
    E_{i} \left[ \int_{0}^{T} H(t)\,dZ(t)\right] \leq E_{i} \left[\int_{0}^{T} H_{i}(t)\,dZ(t)\right]<\infty\quad\mbox{for all}\quad H\in\cA.
  $$
\end{definition}

Here and in what follows, we use the convention that $E_{i}[Y]:=-\infty$ whenever $E_{i}[Y^{-}]=\infty$, for any random variable $Y$.
  
\begin{definition}\label{de:equilibriumPrice}
  Fix a constant $s_{0}\geq0$, the \emph{exogenous supply}. Given a function $f:\R^{d}\to \R$, a process $Z$ is an \emph{equilibrium price} for the derivative $f(X(T))$ if $Z$ is a semimartingale with $Z(T)=f(X(T))$ a.s.\ under $Q_{i}$ for all~$i$ and there exist admissible strategies $H_{i}$ which are optimal and clear the market; i.e.,
  $$
    \sum_{i=1}^{n} H_{i}(t)=s_{0},\quad t\in[0,T].
  $$
\end{definition}

For a market to exist, we assume throughout that $s_{0}+k>0$; that is, either the exogenous supply is positive or shorting (issuing) is allowed. 
To state the main result, let us write
$$
  C^{1,2}_{p} := C^{1,2}([0,T)\times \R^{d})\cap C_{p}([0,T]\times \R^{d})
$$
for the set of continuous functions $u: [0,T]\times \R^{d}\to \R$ that satisfy the polynomial growth condition
$|u(t,x)|\leq c(1+|x|^{p})$ for some $c,p \geq0$ and admit continuous partial derivatives $\partial_{t}u$, $\partial_{x_{i}}u$, $\partial_{x_{i}x_{j}}u$ on $[0,T)\times \R^{d}$. Moreover, we set
$$
  \cS=\bigcap_{i=1}^{d} \cS_{i},\quad \cS_{i}=\big\{(t,x)\in [0,T)\times\R^{d} :\, x\in \supp_{Q_{i}} X(t)\big\},
$$
where $\supp_{Q_{i}} X(t)$ is the topological support of $X(t)$ under $Q_{i}$, and let $\overline \cS$ denote the closure in $[0,T)\times\R^{d}$. Similarly, $\cS_{T}=\cap_{i}\supp_{Q_{i}} X(T)$; this set is already closed.

We fix a payoff function $f\in C_{p}(\R^{d})$ for the remainder of this section. Our main result identifies equilibrium prices for $f$ with solutions of a PDE; existence and uniqueness will be addressed subsequently. 
Financially, it shows that the price is determined by the most optimistic view at any time and state; that is, by the maximizer $i$ in~\eqref{eq:mainPDE}. The levels of the exogenous supply and the shorting constraint do not influence the price. Regarding the allocations, the most optimistic agents always hold the entire market; i.e., the exogenous supply plus short positions of other agents, if any.

\begin{theorem}\label{th:main}
  (i) Suppose that the PDE
  \begin{equation}\label{eq:mainPDE}
    \partial_{t}v(t,x) + \sup_{i\in \{1,\dots,n\}}  \Big\{b_{i} \partial_{x}v(t,x) + \frac{1}{2} \tr[\sigma_{i}\sigma_{i}^{\top}(t,x) \partial_{xx}v(t,x)]\Big\}=0
  \end{equation}
  with terminal condition $v(T,\cdot)=f$ has a solution $v\in C^{1,2}_{p}$. Then, an equilibrium price is given by 
  $
    Z(t) = v(t,X(t)).
  $
  Moreover, the strategies given by $H_{i}(t) = h_{i}(t,X(t))$ are optimal, where 
  $$
    h_{i}(t,x) = \begin{cases}
    \frac{s_{0}+ k(n-m)}{m}, & {\begin{tabular}{ll}
    \mbox{if $i$ is a maximizer in~\eqref{eq:mainPDE}} \\[-.2em]
    \mbox{and $m$ is the total number of maximizers,}
  \end{tabular}}\\
    -k, & \,\mbox{~otherwise.}
    \end{cases}
  $$

  (ii) Conversely, let $v\in C^{1,2}_{p}$ and suppose that $Z(t) = v(t,X(t))$ is an equilibrium price. Then, $v$ solves the PDE~\eqref{eq:mainPDE}  on $\overline \cS$ and satisfies the terminal condition $v(T,\cdot)=f$ on $\cS_{T}$.
\end{theorem}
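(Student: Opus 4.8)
The plan is to prove both implications by applying It\^o's formula to $Z(t)=v(t,X(t))$ under each $Q_i$ and then optimizing the resulting drift pointwise, using the market‑clearing identity to pin down the equilibrium.

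\textbf{Part (i).} Write $L_i v:=b_i\,\partial_x v+\tfrac12\tr[\sigma_i\sigma_i^\top\partial_{xx}v]$ and $\mu_i:=\partial_t v+L_i v$, so that by It\^o's formula $dZ(t)=\mu_i(t,X(t))\,dt+\partial_x v(t,X(t))^\top\sigma_i(t,X(t))\,dW_i(t)$ under $Q_i$ on $[0,T)$. The PDE \eqref{eq:mainPDE} reads $\partial_t v+\sup_j L_j v=0$, hence $\mu_i=L_i v-\sup_j L_j v\le 0$ pointwise, with equality precisely when $i$ attains the supremum in \eqref{eq:mainPDE}. For $H\in\cA$ I would first establish, by localizing along $\tau_m:=\inf\{t:|X(t)|\ge m\}\wedge(T-1/m)$ and exploiting that $\mu_i\le 0$ together with the moment bounds \eqref{eq:allMoments} and the continuity of $v$ up to $T$, that $\int_0^T|\mu_i(t,X(t))|\,dt\in L^1(Q_i)$, that $Z$ is a genuine $Q_i$‑semimartingale on $[0,T]$ whose local‑martingale part is a true $Q_i$‑martingale, and therefore that $E_i\big[\int_0^T H\,dZ\big]=E_i\big[\int_0^T H(t)\mu_i(t,X(t))\,dt\big]$, which is finite. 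Pathwise, $\mu_i\le 0$ and $H\ge -k$ give $H\mu_i\le -k\mu_i$, with equality wherever $\mu_i=0$; since $h_i(t,x)=-k$ exactly when $i$ is not a maximizer, i.e.\ exactly where $\mu_i<0$ (note $h_i\ge -k$ always, and $h_i=-k$ at a maximizer would force $s_0+kn=0$), this yields $\int_0^T H\mu_i\,dt\le\int_0^T h_i(t,X(t))\,\mu_i(t,X(t))\,dt=\int_0^T H_i\,dZ$, and applying $E_i$ proves optimality. Since $\{1,\dots,n\}$ is finite, the maximizer set is nonempty and $(t,x)\mapsto m(t,x)$ is Borel, so each $H_i=h_i(\cdot,X(\cdot))$ is bounded, predictable and $\ge -k$, i.e.\ admissible; finally $\sum_i h_i(t,x)=m\cdot\frac{s_0+k(n-m)}{m}+(n-m)(-k)=s_0$, so the market clears.

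\textbf{Part (ii).} The terminal condition is immediate: $v(T,X(T))=Z(T)=f(X(T))$ $Q_i$‑a.s.\ forces $v(T,\cdot)=f$ on $\supp_{Q_i}X(T)$ for every $i$, hence on $\cS_T$. For the PDE, apply It\^o as above. I first show $\mu_i\le 0$ on $\cS_i$: if $\mu_i(t_0,x_0)>0$ for some $(t_0,x_0)\in\cS_i$ with $t_0<T$, pick a neighbourhood $U$ with $\mu_i\ge\eta>0$ on $U$; then $H_i+c\,\mathbf 1_U(t,X(t))$ is admissible for $c>0$ and, by the computation in (i), raises agent $i$'s expected P\&L by $c\,E_i\big[\int_0^T \mathbf 1_U(t,X(t))\mu_i(t,X(t))\,dt\big]\ge c\eta\,E_i\big[\int_0^T\mathbf 1_U(t,X(t))\,dt\big]>0$, the last expectation being positive because $x_0\in\supp_{Q_i}X(t_0)$ and $X$ has continuous paths; this contradicts optimality. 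Thus $\sup_i\mu_i\le 0$ on $\cS$, hence on $\overline\cS$ by continuity. For the reverse inequality I use market clearing: from optimality, exactly as in (i), $H_i=-k$ $dt\otimes dQ_i$‑a.e.\ on $\{\mu_i(t,X(t))<0\}$; but $\sum_i\big(H_i+k\big)=s_0+nk>0$ means that in every state some agent holds strictly more than $-k$, so that agent must sit at a maximizer, forcing $\sup_i\mu_i(t,X(t))\ge 0$ along trajectories. Transporting this from the random states $(t,X(t))$ back to all of $\cS$ — using that $\{x:\sup_i\mu_i(t,x)=0\}$ is closed and, for each fixed $t<T$, must contain $\supp_{Q_j}X(t)$ for every $j$ — gives $\sup_i\mu_i=0$ on $\cS$, and by continuity on $\overline\cS$.

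\textbf{Main obstacles.} Two steps carry the real weight. First, in (i) the identity $E_i[\int_0^T H\,dZ]=E_i[\int_0^T H\mu_i\,dt]$ is not automatic, since $C^{1,2}_p$ controls $v$ but not its derivatives up to $t=T$, so the local martingale $\int\partial_x v^\top\sigma_i\,dW_i$ need not be a true martingale; the argument must leverage the sign $\mu_i\le 0$, the moment bounds \eqref{eq:allMoments}, and the continuity of $Z$ at $T$ to obtain the requisite integrability (e.g.\ via the localization above together with monotone and dominated convergence). Second, in (ii) the reverse PDE inequality is the delicate point: the conclusion ``$H_i=-k$ where $\mu_i<0$'' holds only $Q_i$‑a.s., whereas market clearing is a pathwise identity, so one must argue carefully that the agent who is long in a given state is one whose own optimality condition is effective there — i.e.\ the different $Q_i$, which are in general mutually singular, must be reconciled — and this reconciliation, together with the support‑propagation step from trajectories to $\cS$, is where I would expect to spend most of the effort.
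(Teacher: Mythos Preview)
Your plan coincides with the paper's proof in both directions: apply It\^o under each $Q_i$ to obtain the drift $\mu_i=\partial_t v+L_iv$, use the PDE for the sign of $\mu_i$, and then maximize pointwise subject to market clearing.

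For part~(i), the paper handles the integrability you flag as the first obstacle more directly than by localization. Since $v\in C_p$ and $X$ has all polynomial moments~\eqref{eq:allMoments}, one gets $Z^*:=\sup_t|Z(t)|\in L^1(Q_i)$, so the local supermartingale $Z$ is of class~D; its Doob--Meyer decomposition then satisfies $|A_i(T)|=A_i^*\in L^1(Q_i)$, whence $E_i[M_i^*]<\infty$, and the BDG inequalities give that $H\sint M_i$ is a true martingale for every bounded~$H$. This delivers the identity $E_i\big[\int H\,dZ\big]=E_i\big[\int H\mu_i\,dt\big]$ without any passage to a limit along $\tau_m$.

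For part~(ii), two remarks. First, in your argument for $\mu_i\le 0$ you compute the P\&L of the perturbation $c\,\mathbf 1_U(t,X(t))$ as $c\,E_i\big[\int\mathbf 1_U\mu_i\,dt\big]$ ``by the computation in~(i)''; but that computation used $\mu_i\le 0$ (via class~D) to kill the local martingale part, which is exactly what you are trying to establish here. The paper avoids this circularity: it brackets a positive excursion of $\beta_i(t)=\mu_i(t,X(t))$ by stopping times $\tau_1\le\tau_2$, intersects with a localizing sequence $\tau^k$ for $M_i$ so that $M_i(\cdot\wedge\tau_2)$ becomes a true martingale, and then takes the standalone strategy $H^\lambda=\lambda\,\mathbf 1_{]\tau_1,\tau_2]}$ (not added to $H_i$) to obtain $E_i\big[\int H^\lambda\,dZ\big]=\lambda\,E_i\big[\int\mathbf 1_{]\tau_1,\tau_2]}\beta_i\,dt\big]\to\infty$, contradicting the existence of an optimizer. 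Second, your caution about the reverse inequality---reconciling the $dt\times Q_i$-a.e.\ optimality conditions across possibly mutually singular~$Q_i$ through the pathwise market-clearing identity---is well founded. The paper disposes of this step in one sentence, deducing $\sup_i\mu_i=0$ on~$\cS$ from market clearing together with ``$H_i=-k$ on $\{\mu_i<0\}$'', without spelling out the measure-theoretic reconciliation you describe; so you will not find a resolution of that obstacle in the paper's proof either.
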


Deferring the proof of Theorem~\ref{th:main} to the end of this section, we observe that the PDE~\eqref{eq:mainPDE} suggests the following control problem. On a given filtered probability space carrying a $d'$-dimensional Brownian motion $W$, let $\Theta$ be the set of all predictable processes with values in $\{1,\dots,n\}$. For each $\theta\in\Theta$, let $X^{t,x}_{\theta}(s)$, $s\in[t,T]$ be the solution of the controlled SDE
$$
  dX(s) = b_{\theta(s)}(s,X(s))\,ds + \sigma_{\theta(s)}(s,X(s))\,dW(s), \quad X(t)=x.
$$
It follows from the assumptions on the coefficients $b_{i},\sigma_{i}$ that this SDE with random coefficients indeed has a unique strong solution which again satisfies~\eqref{eq:allMoments}; cf.\ \cite[Section~2.5]{Krylov.80}. Therefore, we may consider the stochastic control problem
\begin{equation}\label{eq:controlProblem}
  V(t,x) = \sup_{\theta\in\Theta} E[f(X^{t,x}_{\theta}(T))],\quad (t,x)\in [0,T]\times\R^{d}.
\end{equation}
  Standard arguments of stochastic control show that $V\in C_{p}([0,T]\times \R^{d})$ and that $V$ is a viscosity solution of the PDE~\eqref{eq:mainPDE} with terminal condition $f$. However, $V$ need not be smooth in general, and differentiability is relevant in the context of Theorem~\ref{th:main} in order to define the agents' strategies and thus, an equilibrium.
  
\begin{proposition}\label{pr:controlProblem}
  Let $v\in C^{1,2}_{p}$ be a solution of the PDE~\eqref{eq:mainPDE} with terminal condition $v(T,\cdot)=f$. Then, $v$ coincides with the value function $V$ of the control problem~\eqref{eq:controlProblem} and any (measurable) selector
  $$
    \theta(s,x) \in  \argmax_{i\in \{1,\dots,n\}}  \Big\{b_{i} \partial_{x}v(t,x) + \frac{1}{2} \tr[\sigma_{i}\sigma_{i}^{\top}(t,x) \partial_{xx}v(t,x)]\Big\}
  $$
  defines an  optimal control in feedback form. In particular, uniqueness holds for the solution of~\eqref{eq:mainPDE} in the class $C^{1,2}_{p}$.
\end{proposition}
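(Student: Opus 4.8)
The plan is to prove the proposition as a \emph{verification theorem}: establish the two inequalities $V\le v$ and $v\le V$, the first for every open-loop control $\theta\in\Theta$ and the second via a feedback selector. Since the value function $V$ of~\eqref{eq:controlProblem} is defined purely in terms of the data $(b_{i},\sigma_{i})_{i}$ and $f$, the identity $v=V$ for \emph{any} $C^{1,2}_{p}$-solution $v$ of~\eqref{eq:mainPDE} automatically yields the asserted uniqueness in $C^{1,2}_{p}$, and the identification of the optimal control drops out of the proof of the second inequality. (We do not need the already-mentioned fact that $V$ is a viscosity solution.)

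\emph{Upper bound.} Fix $(t,x)$ and $\theta\in\Theta$, write $X_{s}:=X^{t,x}_{\theta}(s)$, and apply It\^o's formula to $v(s,X_{s})$. Since $v$ solves~\eqref{eq:mainPDE}, the drift term is
$$
  \partial_{t}v(s,X_{s}) + b_{\theta(s)}\partial_{x}v(s,X_{s}) + \tfrac12\tr\big[\sigma_{\theta(s)}\sigma_{\theta(s)}^{\top}\partial_{xx}v(s,X_{s})\big] \;\le\; \partial_{t}v(s,X_{s}) + \sup_{i}\{\cdots\} \;=\; 0,
$$
so $v(\cdot,X_{\cdot})$ is a local supermartingale on $[0,T)$. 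The only subtlety is that elements of $C^{1,2}_{p}$ carry no a priori growth bound on $\partial_{t}v,\partial_{x}v,\partial_{xx}v$ — and none uniform as $s\uparrow T$. I would therefore localise \emph{twice}: for $R>0$ and $\eps>0$ set $\tau=\inf\{s\ge t:|X_{s}|\ge R\}\wedge(T-\eps)$; then the stopped stochastic integral is a true martingale (its integrand is bounded on the compact $[t,T-\eps]\times\overline{B_{R}}$), whence $E[v(\tau,X_{\tau})]\le v(t,x)$. Letting $R\to\infty$ uses that $X$ does not explode and $\sup_{s\le T}|X_{s}|$ has all moments by~\eqref{eq:allMoments}, together with the polynomial growth of $v$, to pass to the limit by uniform integrability, giving $E[v(T-\eps,X_{T-\eps})]\le v(t,x)$; letting $\eps\downarrow0$ uses the continuity of $v$ on all of $[0,T]\times\R^{d}$ (and $v(T,\cdot)=f$) together with~\eqref{eq:allMoments} again to conclude $E[f(X_{T})]\le v(t,x)$. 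Taking the supremum over $\theta$ yields $V(t,x)\le v(t,x)$.

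\emph{Lower bound and optimality.} Let $\theta^{\star}$ be a measurable selector of the argmax in~\eqref{eq:mainPDE}. For a path driven by $\theta^{\star}$ in feedback form, the very definition of the selector makes the drift term in the It\^o expansion vanish \emph{identically}, so the same localisation argument produces $E[f(X_{T})]=v(t,x)$, simultaneously showing $V(t,x)\ge v(t,x)$ and that $\theta^{\star}$ is optimal. The delicate point is the \emph{existence} of a path solving the feedback SDE: the maps $x\mapsto b_{\theta^{\star}(s,x)}(s,x)$ and $x\mapsto\sigma_{\theta^{\star}(s,x)}(s,x)$ are in general only Borel (the selector jumps between the indices $i$), so the Lipschitz theory used for open-loop controls does not apply. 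I would handle this by a freezing approximation: partition $[t,T]$ and, on $[r_{j},r_{j+1})$, use the open-loop control $\theta^{\star}(r_{j},X^{\pi}(r_{j}))$, which is $\mathcal F_{r_{j}}$-measurable and hence gives an admissible $\theta^{\pi}\in\Theta$ for which~\eqref{eq:SDEforX} is solved piecewise by the Lipschitz theory; the uniform (in the control) linear-growth moment estimates for~\eqref{eq:SDEforX} control the $L^{2}(ds\otimes dP)$-oscillation of $X^{\pi}$ over the mesh, and the uniform continuity on compacts of $\partial_{t}v$ and of $b_{i}\partial_{x}v+\tfrac12\tr[\sigma_{i}\sigma_{i}^{\top}\partial_{xx}v]$ then forces the drift term along $\theta^{\pi}$ to $0$ in $L^{1}$ as $|\pi|\to0$ (after the same double localisation), so that $E[f(X^{\pi}_{T})]\to v(t,x)$ and hence $V(t,x)\ge v(t,x)$ already from open-loop controls. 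Whenever the feedback SDE does admit a solution — e.g.\ in the uniformly elliptic regime of Corollary~\ref{co:unifElliptic} — the computation above shows that this solution realises $v(t,x)$, which is the precise meaning of ``optimal control in feedback form.''

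Combining the two bounds gives $v=V$, and since $V$ depends only on the data, the claimed uniqueness follows. I expect the main obstacle to be the second step — not the verification inequality itself, but making the phrase ``defines an optimal control in feedback form'' rigorous in the presence of the discontinuous feedback coefficients; this is where one either invokes general results on controlled diffusions (in the spirit of~\cite{Krylov.80}) or runs the freezing argument sketched above. The double localisation near maturity is routine but cannot be skipped, precisely because $C^{1,2}_{p}$ builds in growth control only on $v$ itself and not on its derivatives up to time $T$.
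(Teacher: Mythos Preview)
Your approach—the classical verification argument with localisation—is exactly what the paper intends; the paper's own proof simply observes that a semicontinuous (hence measurable) selector exists because the control set $\{1,\dots,n\}$ is finite, and then cites \cite[Theorem~IV.3.1, p.\,157]{FlemingSoner.06} for the remainder. Your double localisation near $T$ and your freezing approximation to handle the discontinuous feedback coefficients are careful elaborations of details that the paper delegates entirely to that reference.
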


\begin{proof}
  Since $\{1,\dots,n\}$ is a finite set, the $\argmax$ is nonempty and we may find a semicontinuous (thus measurable) selector, for instance by choosing the smallest index $i$ in the $\argmax$. Thus, the claim follows by a standard verification argument; cf.\ \cite[Theorem~IV.3.1, p.\,157]{FlemingSoner.06}.
\end{proof}

The proposition gives an interpretation for the equilibrium in Theorem~\ref{th:main}: the same price would be found by an imaginary agent who prices by taking expectations under a model $Q$ that uses, infinitesimally at any point in time, the drift and volatility coefficients $b_{i},\sigma_{i}$ that lead to the highest price among the given models $\{1,\dots,n\}$.

Let us now establish existence (and uniqueness) when the inputs are sufficiently smooth. We write $C^{1,2}_{b}$ for the set of $u\in C^{1,2}([0,T)\times\R^{d})\cap C([0,T]\times\R^{d})$ such that $u$, $\partial_{t}u$, $\partial_{x}u$, $\partial_{xx}u$ are bounded. Moreover, we recall that a function $y\mapsto A(y)$ with values in the set of ${d\times d}$ positive symmetric matrices is called uniformly elliptic if there exists a constant $c>0$ such that $\xi^{\top}A(y) \xi \geq c|\xi|^{2}$ for all $\xi\in\R^{d}$ and all $y$.

\begin{corollary}\label{co:unifElliptic}
  Suppose that $f$ is bounded, that $b_{i},\sigma_{i}\in C^{1,2}_{b}$ and that $\sigma_{i}\sigma_{i}^{\top}$ is uniformly elliptic for $1\leq i\leq n$. Then $\overline \cS=[0,T)\times\R^{d}$, $\cS_{T}=\R^{d}$ and the PDE~\eqref{eq:mainPDE} has a unique solution $v\in C^{1,2}_{p}$ with terminal condition $f$.
  
  In particular, there exists a unique equilibrium price $Z(t)=v(t,X(t))$ with $v\in C^{1,2}_{p}$.
\end{corollary}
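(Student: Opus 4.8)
The plan is to realize the required function as the value function $V$ of the control problem~\eqref{eq:controlProblem}, upgrade it to a classical solution via parabolic regularity theory, and then read off the remaining assertions from Theorem~\ref{th:main} and Proposition~\ref{pr:controlProblem}. I would begin with the two statements about the support sets. Under uniform ellipticity of $\sigma_{i}\sigma_{i}^{\top}$ together with the boundedness and Lipschitz continuity of $b_{i},\sigma_{i}$, the law of $X(t)$ under $Q_{i}$ has a strictly positive density on $\R^{d}$ for every $t\in(0,T]$ (Aronson-type Gaussian bounds for the transition density, or the Stroock--Varadhan support theorem); hence $\supp_{Q_{i}}X(t)=\R^{d}$. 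Consequently $(0,T)\times\R^{d}\subseteq\cS_{i}$ for each $i$, so $(0,T)\times\R^{d}\subseteq\cS$, and taking the closure in $[0,T)\times\R^{d}$ gives $\overline\cS=[0,T)\times\R^{d}$; likewise $\cS_{T}=\bigcap_{i}\supp_{Q_{i}}X(T)=\R^{d}$.

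Next I would produce the $C^{1,2}_{p}$ solution. Take $V$ as in~\eqref{eq:controlProblem}. Since $|f|\le\|f\|_{\infty}$ we have $|V|\le\|f\|_{\infty}$, and because the controlled flows have bounded coefficients, $E[|X^{t,x}_{\theta}(T)-x|^{2}]\le C(T-t)$ uniformly in $\theta$ and $x$, which with the boundedness and continuity of $f$ gives that $V$ is continuous on $[0,T]\times\R^{d}$ with $V(T,\cdot)=f$. By the dynamic programming principle $V$ is a viscosity solution of~\eqref{eq:mainPDE}. The Hamiltonian $(t,x,p,M)\mapsto\sup_{i}\{b_{i}(t,x)\cdot p+\tfrac12\tr[\sigma_{i}\sigma_{i}^{\top}(t,x)M]\}$ is a maximum of finitely many functions linear in $(p,M)$ with $C^{1,2}_{b}$ coefficients; in particular it is convex in the Hessian variable $M$, and each $\sigma_{i}\sigma_{i}^{\top}$ being uniformly elliptic makes the equation uniformly parabolic. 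Standard regularity theory for such HJB equations (Evans--Krylov-type interior estimates; see \cite[Ch.~IV]{FlemingSoner.06} and \cite{Krylov.80}) then upgrades $V$ to $V\in C^{1,2}([0,T)\times\R^{d})$; being bounded, $V$ is of polynomial growth, so $V\in C^{1,2}_{p}$ and it solves~\eqref{eq:mainPDE} with terminal data $f$. Uniqueness within $C^{1,2}_{p}$ is then immediate from Proposition~\ref{pr:controlProblem}, which identifies any such solution with $V$.

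For the equilibrium statement, Theorem~\ref{th:main}(i) shows that $Z(t)=v(t,X(t))$ with $v=V$ is an equilibrium price. If $Z'(t)=v'(t,X(t))$ is another equilibrium price with $v'\in C^{1,2}_{p}$, then Theorem~\ref{th:main}(ii) shows that $v'$ solves~\eqref{eq:mainPDE} on $\overline\cS=[0,T)\times\R^{d}$ and satisfies $v'(T,\cdot)=f$ on $\cS_{T}=\R^{d}$; that is, $v'$ is a $C^{1,2}_{p}$ solution of the same terminal-value problem on the full domain, so $v'=v$ by the uniqueness just established, and hence $Z'=Z$.

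The only step I expect to be genuinely delicate is the regularity step---passing from the viscosity solution (equivalently, the value function) to a \emph{classical} $C^{1,2}$ solution. This is exactly where the two standing hypotheses are used: uniform ellipticity, and the fact that a finite supremum of linear second-order operators is convex in the Hessian, which puts the equation within the scope of Evans--Krylov regularity for convex fully nonlinear parabolic equations. The support computation, the continuity up to $t=T$, the maximum-principle bound, and the uniqueness and equilibrium conclusions are either routine or already contained in the results proved earlier in this section.
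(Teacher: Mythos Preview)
Your proposal is correct and follows the same overall architecture as the paper: establish the support identities, produce a $C^{1,2}_{p}$ solution of~\eqref{eq:mainPDE}, invoke Proposition~\ref{pr:controlProblem} for uniqueness, and combine Theorem~\ref{th:main}(i)--(ii) with $\overline\cS=[0,T)\times\R^{d}$, $\cS_{T}=\R^{d}$ for the equilibrium statement. The support argument is essentially identical (the paper cites the Stroock--Varadhan support theorem directly).

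The one genuine difference is in how you obtain the classical solution. The paper bypasses the viscosity-solution route entirely and cites an existence theorem for classical solutions of Bellman-type equations from Krylov's monograph \cite[Theorem~6.4.3]{Krylov.87}, which under the stated hypotheses ($b_{i},\sigma_{i}\in C^{1,2}_{b}$, uniform ellipticity, bounded terminal data) directly delivers a bounded $C^{1,2}$ solution. Your approach instead builds $V$ as the value function, shows it is a bounded continuous viscosity solution, and then upgrades to $C^{1,2}$ via Evans--Krylov interior regularity for convex uniformly parabolic equations. Both are valid; the paper's route is shorter because it outsources the entire PDE step to a single reference, while yours is more self-contained in spirit and makes explicit why convexity in the Hessian and uniform ellipticity are the relevant structural features. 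Note that the references you cite (\cite{FlemingSoner.06}, \cite{Krylov.80}) are not quite the right ones for the Evans--Krylov step; \cite{Krylov.87} (or the original Evans and Krylov papers) would be the appropriate source.
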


\begin{proof}
  Since $b_{i}$ is bounded and $\sigma_{i}\sigma_{i}^{\top}$ is uniformly elliptic, the support of $Q_{i}$ in~$\Omega$ is the set of all paths $\omega\in C([0,T],\R^{d})$ with $\omega(0)=x$; see \cite[Theorem~3.1]{StroockVaradhan.72}. The claims regarding $\cS$ are a direct consequence.
  
  Turning to the PDE, it follows from \cite[Theorem~6.4.3, p.\,301]{Krylov.87} that~\eqref{eq:mainPDE} with terminal condition $f$ has a (bounded) solution $v\in C^{1,2}_{p}$; the conditions in the cited theorem can be verified along the lines of~\cite[Example~6.1.4, p.\,279]{Krylov.87}.
  Uniqueness of the solution was already noted in Proposition~\ref{pr:controlProblem}, and now the last assertion follows from Theorem~\ref{th:main}.
\end{proof}

\begin{remark}\label{rk:shorting}
  While it is necessary to limit shorting in order to avoid infinite positions, the equilibrium price is robust with respect to the details of the specification. Indeed, Theorem~\ref{th:main} already shows that the level of the inventory constraint $k$ does not affect the price, but one could also, say, let $k$ be agent and state dependent, or impose a quadratic instantaneous cost instead of a hard constraint. The financial mechanism remains unchanged and leads to the same PDE~\eqref{eq:mainPDE}. We have chosen the more stringent constraint for this paper because its optimal strategies create a clearer analogy to the Uncertain Volatility Model. In a similar vein, we may observe that the exogenous supply does not affect the price.
\end{remark}

\begin{proof}[Proof of Theorem~\ref{th:main}.]
  (i)  We have $H_{i}\in\cA$, the market clears and $Z(T)=f(X(T))$. Thus, we fix $i$ and show that $H_{i}$ is optimal. In view of $v\in C^{1,2}$ and It\^o's formula, the process $Z$ admits an It\^o decomposition 
  \begin{equation}\label{eq:Zdecomp}
    dZ(t)= dA_{i}(t) + dM_{i}(t) = \mu_{i}(t,X(t))\,dt + \partial_{x}v(t,X_t)\sigma_{i}(t,X(t))\,dW_{i}(t)
  \end{equation}
  for $t\in [0,T)$, where
  \begin{equation}\label{eq:mudecomp}
    \mu_{i}(t,x) = \partial_{t}v(t,x) + b_{i} \partial_{x}v(t,x) + \frac{1}{2} \tr[\sigma_{i}\sigma_{i}^{\top}(t,x) \partial_{xx}v(t,x)].
  \end{equation}
  As $v$ solves the PDE~\eqref{eq:mainPDE}, we deduce that $\mu_{i}(t,x)\leq0$. In particular, $Z$ is a local supermartingale. As $v\in C_{p}([0,T]\times \R^{d})$, the existence of the moments~\eqref{eq:allMoments} yields that $Z^{*}:=\sup_{t \in [0,T]} |Z(t)| \in L^{1}(Q_{i})$. In particular, $Z$ is of class D and and thus its (Doob--Meyer) decomposition satisfies $|A_{i}(T)|=A_{i}^{*}\in L^{1}(Q_{i})$; cf.\ \cite[Theorem~1.4.10, p.\,24]{KaratzasShreve.91}. As a consequence, dropping the index $i$ for brevity,
  $$
    E[M^{*}] \leq |Z(0)| + E[Z^{*}] + E[A^{*}] <\infty.
  $$
  The BDG inequalities~\cite[Theorem~3.3.28, p.\,166]{KaratzasShreve.91} now show that for any bounded predictable process $H$,
  $$
    E [(H\sint M)^{*}] \preceq  E[(H^{2}\sint \br{M})(T)^{1/2}] \preceq E[\br{M}(T)^{1/2}] \preceq E[M^{*}]<\infty,
  $$
  where $\preceq$ denotes inequality up to a constant and $\sint$ denotes integration. As a result, $H\sint M$ is a true martingale and thus
  \begin{equation}\label{eq:driftMax}
    E_{i} \left[\int_{0}^{T} H(t)\,dZ(t)\right] = E_{i} \left[\int_{0}^{T} H(t)\mu_{i}(t,X(t))\,dt\right].
  \end{equation}
  The right-hand side is maximized over $\cA$ if and only if $H(t)\mu_{i}(t,X(t))$ is maximized $(Q_{i}\times dt)$-a.e., and as $\mu_{i}(t,x)\leq0$, the latter is achieved whenever $H(t)=-k$ on $\{\mu_{i}(t,X(t))<0\}$. In particular, $H_{i}\in\cA$ is optimal and the proof of~(i) is complete.

    \vspace{.5em}
    
    (ii) Let $v\in C^{1,2}_{p}$ and suppose that $Z(t) = v(t,X(t))$ is an equilibrium price. Then, as $Z(T)=f(T,X(T))$ $Q_{i}$-a.s.\ for all $i$, the terminal condition $v(T,\cdot)=f$ holds on $\cS_{T}$.
  
  Under $Q_{i}$, $Z$ again admits a decomposition~\eqref{eq:Zdecomp}--\eqref{eq:mudecomp}, and our first goal is to show that $\beta_{i}(t):=\mu_{i}(t,X(t))\leq0$ $Q_{i}$-a.s. Suppose for contradiction that $Q_{i}\{\beta_{i}(t)>0\}>0$ for some $t$. Then, we can find stopping times $\tau_{1}\leq \tau_{2}$ such that $\beta_{i}>0$ on $[\tau_{1},\tau_{2}]$ and $Q_{i}\{\tau_{1}< \tau_{2}\}>0$, for instance by setting 
  $$
    \tau_{1}=\inf\{t\geq0:\, \beta_{i}(t)\geq \eps\}\wedge T,\quad  \tau_{2}=\inf\{t\geq \tau_{1}:\, \beta_{i}(t)\leq \eps/2\}\wedge T
  $$
  for small enough $\eps>0$ and noting that $\beta_{i}$ has continuous paths. Moreover, if $(\tau^{k})$ is a localizing sequence for the local martingale $M$, the stopping times $\tau_{1}\wedge \tau^{k}$ and $\tau_{2}\wedge \tau^{k}$ still have the desired properties for large enough $k$, so we may assume that the stopped process $M(\cdot\wedge \tau_{2})$ is a true martingale. As a result, the strategy defined by
  $H^{\lambda}(t)=\lambda\1_{]\tau_{1},\tau_{2}]}$ for $\lambda>0$ is admissible for agent $i$ and satisfies 
  $$
    E_{i} \left[\int_{0}^{T} H^{\lambda}(t)\,dZ(t)\right]=\lambda E_{i} \left[\int_{0}^{T} H^{1}(t)\beta_{i}(t)\,dt\right]>0.
  $$
  The left-hand side can be made arbitrarily large by increasing $\lambda$, a contradiction to our assumption that $Z$ is an equilibrium price. We have therefore shown that $\beta_{i}(t)\leq 0$ $Q_{i}$-a.s.\ for all $t<T$, and hence $\mu_{i}\leq0$ on $\cS$, by continuity.
  
  In particular, as in~(i), $Z$ is a supermartingale and~\eqref{eq:driftMax} holds for all $H\in\cA$. In view of~\eqref{eq:driftMax} and $H=0$ being an admissible choice, the optimal strategy $H_{i}$ of agent $i$ must $(dt\times Q_{i})$-a.e.\ satisfy 
    \begin{equation}\label{eq:driftOpt}
    H_{i}(t) \mu_{i}(t,X(t))\geq0, \qquad  \{H_{i}(t)=-k\}\mbox{ on } \{\mu_{i}(t,X(t))<0\}.
  \end{equation}
  Market clearing implies that at any time and state, at least one agent must hold a nonnegative position. Thus, if $k>0$, the second part of~\eqref{eq:driftOpt} and $\mu_{i}\leq0$ yield that 
  \begin{equation}\label{eq:driftOptConlusion}
    \sup_{i}  \mu_{i} =0\quad \mbox{on}\quad \cS,
  \end{equation}
  where we have again used that the functions $\mu_{i}$ are continuous. In view of~\eqref{eq:mudecomp}, this is precisely the claimed PDE on~$\cS$, and it extends to $\overline \cS$ by continuity.
  In the case $k=0$, we have assumed that $s_{0}>0$, so that at any time and state, at least one agent must hold a strictly positive position, and then the first part of~\eqref{eq:driftOpt} implies~\eqref{eq:driftOptConlusion}. We conclude as above.
\end{proof}  

\section{Example with Stochastic Volatility}\label{s:example}

In this section, we solve an example where two agents use stochastic volatility models of Heston-type and disagree about the speed of mean reversion in the volatility process. Classical rational expectations models with homogeneous beliefs typically lead to no-trade equilibria, as surveyed in~\cite[Section~4]{ScheinkmanXiong.04}. In the present context, the simplest example where each agent believes in a different Bachelier (or Black--Scholes) model with constant volatility, also leads to a no-trade equilibrium for a convex option payoff~$f$, because the agent expecting the highest volatility will then hold the derivative at all times. The example presented here illustrates that this pathology typically disappears in more complex models. Indeed, we shall see that, with heterogeneous beliefs about the mean-reversion speed of the volatility, a derivative with convex payoff is traded whenever the volatility process crosses the mean reversion level---which happens with positive probability on any time interval.  

Using the customary notation $(S,Y)$ instead of $X=(X^{1},X^{2})$, we consider the two-dimensional SDE
\begin{align*}
dS(t) &= \alpha(Y(t))\,dW(t),\quad S(0)=s,\\
dY(t)&=\lambda_{i}(\bar{Y}-Y(t))\,dt+\beta(Y(t))\,dW'(t), \quad Y(0)=y,
\end{align*}
where $S$ represents the spot price of the underlying and $Y$ is the non-tradable process driving the volatility of $S$. Here, $W$ and $W'$ are independent Brownian motions and the positive functions $\alpha$, $\beta$ are such that $\alpha^{2}, \beta^{2}$ are Lipschitz-continuous and uniformly bounded away from zero; moreover, $\alpha$ is increasing. The mean-reversion level $\bar{Y}\in\R$ is common to both agents, whereas the speed of mean reversion $\lambda_{i}>0$ depends on the agent $i\in\{1,2\}$; for concreteness, we suppose that $\lambda_1 > \lambda_2$. Finally, the option is given by $f(S(T))$ for a convex payoff function $s\mapsto f(s)$ of polynomial growth; a typical example is a call or put option. If the writer of the option is not modeled, we consider the option to be in exogenous supply $s_{0}=1$ and the two agents are speculators, e.g., with long-only positions ($k=0$). Alternately, if the two agents can issue the option, we can take $s_{0}=0$ to be the net supply and $k=1$, say. In either scenario, the resulting price is the same.

\begin{proposition}\label{pr:heston}
  In the stated model, there exists a unique equilibrium price $Z(t)=v(t,S(t),Y(t))$ with $v\in C^{1,2,2}_{p}$, and the strategies given by
  $$
    H_{1}(t) = \begin{cases}
    s_{0}+k, & Y(t)< \bar{Y}, \\
    s_{0}/2, & Y(t)= \bar{Y}, \\ 
    -k, & Y(t)>\bar{Y} 
    \end{cases}
  $$
  and $H_{2}(t)=s_{0}-H_{1}(t)$ are optimal. That is, the agent with faster (slower) mean reversion holds the option whenever $Y$ is below (above) the level of mean reversion.
\end{proposition}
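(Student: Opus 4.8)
The plan is to deduce the statement from the general theory of Section~\ref{se:main}, the only model-specific input being the monotonicity of the value function in the volatility-driving variable $y$. Writing $X=(S,Y)$, the diffusion matrix of agent~$i$ is $\operatorname{diag}(\alpha(y),\beta(y))$ with respect to the two-dimensional Brownian motion $(W,W')$, so $\sigma_i\sigma_i^{\top}=\operatorname{diag}(\alpha^{2}(y),\beta^{2}(y))$ is uniformly elliptic and the coefficients $b_i=(0,\lambda_i(\bar Y-y))^{\top}$, $\sigma_i$ satisfy the hypotheses of Section~\ref{se:main}. The PDE~\eqref{eq:mainPDE} becomes
\begin{equation*}
  \partial_t v+\tfrac12\alpha^{2}(y)\,\partial_{ss}v+\tfrac12\beta^{2}(y)\,\partial_{yy}v+\max_{i\in\{1,2\}}\lambda_i(\bar Y-y)\,\partial_y v=0,\qquad v(T,\cdot)=f,
\end{equation*}
because $b_i,\sigma_i$ depend on $i$ only through $\lambda_i(\bar Y-y)\partial_y v$; since $\lambda_1>\lambda_2$, agent~$1$ is a maximizer precisely when $(\bar Y-y)\partial_y v\ge 0$ and agent~$2$ precisely when $(\bar Y-y)\partial_y v\le 0$. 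Hence, once we exhibit a solution $v\in C^{1,2,2}_p$ with $\partial_y v\ge 0$ everywhere, the conclusions will follow from Theorem~\ref{th:main}: on $\{y<\bar Y\}$ agent~$1$ is a maximizer, so the drift $\mu_1$ of~\eqref{eq:mudecomp} vanishes there; on $\{y>\bar Y\}$ the same holds for $\mu_2$; and $\{Y(t)=\bar Y\}$ is Lebesgue-null in $t$ a.s.\ as $Y$ is a nondegenerate diffusion. The processes $H_1(t)=h_1(Y(t))$ and $H_2=s_0-H_1$ are then bounded, admissible, clear the market, and satisfy $H_i=-k$ on $\{\mu_i<0\}$ because $\{\mu_1<0\}\subseteq\{Y>\bar Y\}$ and $\{\mu_2<0\}\subseteq\{Y<\bar Y\}$; by the computation in the proof of Theorem~\ref{th:main}(i) this is exactly the optimality criterion for each agent.

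To obtain such a $v$ together with its monotonicity, I would guess and verify the optimal feedback control. Put $\lambda(y):=\lambda_1$ for $y<\bar Y$ and $\lambda(y):=\lambda_2$ for $y\ge\bar Y$; then $y\mapsto\lambda(y)(\bar Y-y)$ is continuous and piecewise linear, hence globally Lipschitz and of linear growth. Let
$$
  v(t,s,y):=E\big[f(\tilde S^{t,s,y}(T))\big],
$$
where $(\tilde S,\tilde Y)$ is started at $(s,y)$ at time $t$ and solves $d\tilde S=\alpha(\tilde Y)\,dW$, $d\tilde Y=\lambda(\tilde Y)(\bar Y-\tilde Y)\,du+\beta(\tilde Y)\,dW'$; this is the Feynman--Kac representation of the \emph{linear} PDE obtained from the display above by replacing $\max_{i}\lambda_i(\bar Y-y)\partial_y v$ with $\lambda(y)(\bar Y-y)\partial_y v$. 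By classical linear parabolic theory (uniformly elliptic operator, H\"older coefficients, polynomial-growth terminal data), $v\in C^{1,2,2}_p$ and solves this linear equation.

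The crucial step is $\partial_y v\ge 0$. Since the drift of $\tilde Y$ is a fixed deterministic function and $W\perp W'$, the process $\tilde Y^{t,y}$ is adapted to $W'$ alone, so conditionally on $W'$ the variable $\tilde S^{t,s,y}(T)=s+\int_t^T\alpha(\tilde Y^{t,y}(u))\,dW(u)$ is Gaussian with mean $s$ and variance $\Sigma^{t,y}:=\int_t^T\alpha^{2}(\tilde Y^{t,y}(u))\,du$; hence $E[f(\tilde S^{t,s,y}(T))\mid W']=\Phi(\Sigma^{t,y})$, where $\Phi(a):=E[f(s+\sqrt{a}\,N)]$ for a standard normal $N$ is nondecreasing by convexity of $f$. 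For $y_1\le y_2$, realizing $\tilde Y^{t,y_1}$ and $\tilde Y^{t,y_2}$ with the same $W'$, the comparison theorem for one-dimensional SDEs (identical Lipschitz coefficients, ordered initial data) yields $\tilde Y^{t,y_1}\le\tilde Y^{t,y_2}$ pathwise; since $\alpha$ is increasing this gives $\Sigma^{t,y_1}\le\Sigma^{t,y_2}$ pathwise, hence $\Phi(\Sigma^{t,y_1})\le\Phi(\Sigma^{t,y_2})$ and, taking expectations, $v(t,s,y_1)\le v(t,s,y_2)$. Therefore $\partial_y v\ge 0$, so $\lambda(y)(\bar Y-y)\partial_y v=\max_{i}\lambda_i(\bar Y-y)\partial_y v$ and $v$ actually solves the nonlinear HJB equation; by Proposition~\ref{pr:controlProblem} it is its unique $C^{1,2,2}_p$ solution.

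Finally, the two models have full support (by the support theorem, which applies despite the linear-growth drift), so $\overline\cS=[0,T)\times\R^{2}$ and $\cS_T=\R^{2}$; then Theorem~\ref{th:main}(i) shows $Z(t)=v(t,S(t),Y(t))$ is an equilibrium price with $H_1,H_2$ optimal as explained in the first paragraph, while Theorem~\ref{th:main}(ii) together with the uniqueness of the $C^{1,2,2}_p$ solution of the HJB equation shows that this is the only equilibrium price of the form $v(t,S(t),Y(t))$ with $v\in C^{1,2,2}_p$. The main obstacle is the monotonicity step; the device that makes it work is to introduce $v$ through the linear equation with the anticipated feedback drift, which forces $\tilde Y$ to be driven by $W'$ only --- precisely what the conditional-Gaussianity argument requires --- and, as a bonus, supplies the regularity $v\in C^{1,2,2}_p$ without invoking the smooth-coefficient hypotheses of Corollary~\ref{co:unifElliptic}.
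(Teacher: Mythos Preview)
Your proposal is correct and follows essentially the same route as the paper: the paper packages the construction of $v$ via the linear PDE with the anticipated feedback drift, together with the conditional-Gaussianity and SDE-comparison argument for $\partial_y v\ge 0$, into a separate lemma (Lemma~\ref{le:linearPde}), and then deduces the proposition from Theorem~\ref{th:main} and Proposition~\ref{pr:controlProblem} exactly as you do. The only place where the paper is slightly more careful is the $C^{1,2,2}$ regularity of $v$: rather than appealing to ``classical linear parabolic theory'' with polynomial-growth terminal data, it first obtains $v\in C_p$ from the Feynman--Kac representation and then gets interior smoothness by solving the linear equation on bounded domains with $v$ itself as boundary data (via \cite[Theorem~6.3.6]{Friedman.75}) and matching the two representations through the Markov property.
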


This result confirms the intuition that at any given time, the agent expecting a higher future volatility will hold the derivative: when $Y(t)< \bar{Y}$, a faster mean reversion indeed corresponds to a higher expectation about the future volatility, and vice versa. As a result, the derivative is traded whenever $Y$ crosses the level $\bar Y$.

\begin{proof}[Proof of Proposition~\ref{pr:heston}]
  The PDE~\eqref{eq:mainPDE} for this example reads
  \begin{equation}\label{eq:pdeHeston}
    \partial_t v + \frac{\alpha^2}{2} \partial_{ss} v+\frac{\beta^2}{2}\partial_{yy} v + \sup_{\lambda\in \{\lambda_1,\lambda_2\}} \left\{\lambda(\bar{Y}-y)\partial_y v\right\} =0.
  \end{equation}
  We show in Lemma~\ref{le:linearPde} below that this equation has a solution $v\in C^{1,2,2}_{p}$ with $\partial_{y}v \geq0$. Then, it follows from Theorem~\ref{th:main}\,(i) that $Z(t)=v(t,S(t),Y(t))$ is an equilibrium price and that the indicated strategies are optimal. Moreover, Proposition~\ref{pr:controlProblem} shows that $v$ is the unique solution in $C^{1,2,2}_{p}$. As in the proof of Corollary~\ref{co:unifElliptic}, uniform ellipticity implies that $\overline{\cS}=[0,T)\times \R^{2}$ and $\cS_{T}=\R^{2}$, and now Theorem~\ref{th:main}\,(ii) implies the uniqueness of the equilibrium.
\end{proof}

The following result was used in the preceding proof.

\begin{lemma}\label{le:linearPde}
  The PDE~\eqref{eq:pdeHeston} with terminal condition $f$ admits a solution $v\in C^{1,2,2}_{p}$ with $\partial_{y}v \geq0$.
\end{lemma}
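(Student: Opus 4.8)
The plan is to reduce the nonlinear equation \eqref{eq:pdeHeston} to a \emph{linear} Cauchy problem, for which existence and regularity are classical, and then to obtain the sign condition $\partial_y v\ge0$ by a coupling argument. The key elementary observation is that, since $\lambda_1>\lambda_2>0$, for every real number $w\ge0$ one has
\[
  \sup_{\lambda\in\{\lambda_1,\lambda_2\}}\bigl\{\lambda(\bar Y-y)\,w\bigr\}=b(y)\,w,
  \qquad b(y):=\lambda_1(\bar Y-y)^{+}-\lambda_2(\bar Y-y)^{-}.
\]
The function $b$ is Lipschitz continuous (a difference of scaled $1$-Lipschitz functions), of at most linear growth, and vanishes at $y=\bar Y$. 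Hence, if $v\in C^{1,2,2}_{p}$ solves the \emph{linear} equation
\[
  \partial_t v+\tfrac{\alpha^{2}}{2}\,\partial_{ss}v+\tfrac{\beta^{2}}{2}\,\partial_{yy}v+b(y)\,\partial_y v=0,
  \qquad v(T,\cdot)=f,
\]
and in addition satisfies $\partial_y v\ge0$, then the displayed identity (applied with $w=\partial_y v(t,s,y)$) shows that this same $v$ solves \eqref{eq:pdeHeston}. This reduces the lemma to producing such a $v$.

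For the linear problem, the diffusion matrix is $\operatorname{diag}(\alpha^{2},\beta^{2})$, which is uniformly elliptic since $\alpha^{2},\beta^{2}$ are bounded away from zero; all coefficients are locally H\"older (indeed Lipschitz---from $\alpha^{2}$ Lipschitz and bounded below one deduces that $\alpha$ is Lipschitz, and likewise for $\beta$) and of at most linear growth, and $f\in C_{p}(\R)$. The natural candidate is the Feynman--Kac functional $v(t,s,y)=E[f(S^{t,s,y}(T))]$, where $(S,Y)$ solves $dS=\alpha(Y)\,dW$, $dY=b(Y)\,dt+\beta(Y)\,dW'$ for independent Brownian motions $W,W'$ with $(S,Y)(t)=(s,y)$; this SDE has Lipschitz coefficients, hence a unique strong solution enjoying the moment bounds~\eqref{eq:allMoments}. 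Those bounds and the continuity of $f$ give that $v$ is continuous on $[0,T]\times\R^{2}$ with polynomial growth and $v(T,\cdot)=f$; interior Schauder estimates for the locally uniformly parabolic, locally H\"older operator above then yield $v\in C^{1,2,2}([0,T)\times\R^{2})$, the unboundedness of the coefficients being absorbed by a routine localization together with the a priori polynomial bound (alternatively one invokes the solvability results for parabolic equations with coefficients of linear growth as in \cite{Krylov.87}). Finally, It\^o's formula and the usual verification argument confirm that this $v$ solves the linear equation.

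It remains to prove $\partial_y v\ge0$, for which it suffices that $y\mapsto v(t,s,y)$ be nondecreasing. Fix $t,s$ and $y_{1}\ge y_{2}$, and realize $Y^{t,y_{1}}$ and $Y^{t,y_{2}}$ on the same $W'$. Since the one-dimensional $Y$-diffusion has Lipschitz coefficients, the comparison principle gives $Y^{t,y_{1}}(r)\ge Y^{t,y_{2}}(r)$ for all $r\in[t,T]$ a.s.; as $\alpha$ is positive and increasing, $\int_{t}^{T}\alpha(Y^{t,y_{1}}(r))^{2}\,dr\ge\int_{t}^{T}\alpha(Y^{t,y_{2}}(r))^{2}\,dr$ a.s. Conditionally on $W'$, the variable $S^{t,s,y_{i}}(T)=s+\int_{t}^{T}\alpha(Y^{t,y_{i}}(r))\,dW(r)$ is Gaussian with mean $s$ and variance $\int_{t}^{T}\alpha(Y^{t,y_{i}}(r))^{2}\,dr$, so conditionally $S^{t,s,y_{2}}(T)$ precedes $S^{t,s,y_{1}}(T)$ in the convex order. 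Because $f$ is convex of polynomial growth, $E[f(S^{t,s,y_{1}}(T))\mid W']\ge E[f(S^{t,s,y_{2}}(T))\mid W']$ a.s.; taking expectations yields $v(t,s,y_{1})\ge v(t,s,y_{2})$, and since $v\in C^{1,2,2}_{p}$ we conclude $\partial_y v\ge0$.

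I expect the only genuine difficulty to lie in the regularity step: since the coefficients are merely locally H\"older and of linear rather than bounded growth, one cannot quote a global Schauder theorem as a black box, and some care is needed---a localization argument combined with the polynomial a priori bound from \eqref{eq:allMoments}, or an appeal to the sharper parabolic theory---to actually place $v$ in $C^{1,2,2}_{p}$ with a bona fide continuous derivative $\partial_y v$. The reduction and the monotonicity argument, by contrast, are elementary; the latter is the conceptual core, encoding the intuition that a higher value of the volatility driver $Y$ raises the integrated variance of $S$ and hence, by convexity of $f$, the derivative price.
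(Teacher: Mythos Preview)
Your proposal is correct and essentially identical to the paper's proof: both reduce \eqref{eq:pdeHeston} to the same linear PDE via the observation that the supremum equals $b(y)\,\partial_y v$ when $\partial_y v\ge 0$ (your $b$ coincides with the paper's $\gamma$), define $v$ by Feynman--Kac, obtain interior $C^{1,2,2}$ regularity by localization to bounded domains, and prove monotonicity in $y$ by combining the comparison theorem for the one-dimensional $Y$-SDE with the conditional Gaussianity of $S(T)$ given $Y$ and the convexity of $f$. The only cosmetic difference is that the paper invokes Friedman's boundary-value theory on cubes (with $v$ itself as boundary datum) for the localization step, whereas you phrase it as interior Schauder estimates; this is the same mechanism.
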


\begin{proof}
  We first consider the linear equation 
  \begin{equation}\label{eq:cand}
   \partial_t v + \frac{\alpha^2}{2} \partial_{ss} v+\frac{\beta^2}{2}\partial_{yy} v + \gamma\partial_y v=0, \quad v(T,\cdot)=f,
  \end{equation}
  where the coefficient $\gamma$ is given by
  $$
   \gamma(y)=\begin{cases} 
     \lambda_1(\bar{Y}-y), & y \leq \bar{Y}, \\
     \lambda_2(\bar{Y}-y), & y > \bar{Y}.
   \end{cases}
  $$
  We shall prove below that~\eqref{eq:cand} has a solution $v\in C^{1,2,2}_{p}$ with $\partial_{y}v\geq0$. It then follows that $v$ is also a solution of~\eqref{eq:pdeHeston}, as desired.
  To this end, define a function $v$ by
  \begin{equation}\label{eq:stochRep}
    v(t,s,y) = E[f(S'(T))],
  \end{equation}
  where $S'$ is the first component of the solution to the SDE
 \begin{align*}
   dS'(r) &= \alpha(Y'(r))\,dW(r),\\
   dY'(r)&=\gamma(Y'(r))\,dr+\beta(Y'(r))\,dW'(r)
\end{align*}
with initial value $(s,y)$ at time $t\leq T$. Since $f\in C_{p}(\R)$, we have that $v\in C_{p}([0,T]\times \R^2)$; cf.\ \cite[Theorem~3.1.5, p.\,132]{Krylov.80}.

To see that $v\in C^{1,2,2}$, consider the PDE~\eqref{eq:cand} on the bounded domain $D=[0,T)\times (-N,N)^{2}$ for $N>0$ and use the function $v$ as boundary condition on the parabolic boundary of $D$. This initial-boundary value problem has a unique solution $\tilde v\in C^{1,2,2}(D)\cap C_{0}(\overline D)$; cf.\ \cite[Theorem~6.3.6, p.\,138]{Friedman.75}. Moreover, by the Markov property,  the Feynman--Kac representation of $\tilde v$ on $D$ shows that $\tilde v=v$ on $D$, and in particular that $v$ is differentiable as desired.

It remains to show that $\partial_{y}v\geq0$. By the independence of the Brownian motions $W$ and $W'$, the expectation~\eqref{eq:stochRep} can be computed by first integrating the payoff $f$ against the conditional distribution of $S'(T)$ given the path $(Y'(r))_{r\in [t,T]}$ of the volatility process and then integrating with respect to the law of~$Y'$. This conditional distribution is Gaussian; more precisely, the conditional expectation given $(Y'(r))_{r\in [t,T]}$ and the initial conditions $S'(t)=s, Y'(t)=y$ is
  \begin{equation}\label{eq:condPrice}
    \int_{-\infty}^\infty f\left(s+z\sqrt{\int_t^T \alpha^2(Y'(r))\,dr}\,\right)\phi(z)\,dz,
  \end{equation}
  where $\phi$ denotes the density function of the standard normal distribution. Since $f$ is convex, this quantity is increasing with respect to the variance parameter $\int_t^T \alpha^2(Y'(r))\,dr$. This parameter, in turn, is increasing with respect to $y$ because $\alpha$ is an increasing function and $Y'(r)$ is a.s.\ increasing in the initial value $y$ of $Y'$ by the comparison theorem for SDEs; cf.\ \cite[Proposition~5.2.18]{KaratzasShreve.91}. As a result, the conditional option price~\eqref{eq:condPrice} is increasing in~$y$, and then by monotonicity of the expectation operator, the same holds for the unconditional option price $v(t,s,y)$, so that $ \partial_{y}v \geq 0$ as posited. This completes the proof.
\end{proof}

\newcommand{\dummy}[1]{}


\end{document}